\documentclass[preprint,12pt]{elsarticle}

\usepackage{amssymb}
\usepackage{amsmath}
\usepackage{amsthm}

\newtheorem{theorem}{Theorem}
\newtheorem{lemma}{Lemma}
\newtheorem{corollary}{Corollary}
\theoremstyle{definition}
\newtheorem{definition}{Definition}
\newcommand{\idc}{\mathit{id}_c}

\journal{Discrete Applied Mathematics}

\begin{document}

\begin{frontmatter}

\title{Detection and Construction of Small Blocking Sets via Critical Independent Sets and Matching Decomposition}

\author[inst1]{Quanlong Niu}

\author[inst2]{Riheng Jia\corref{cor1}}
\ead{rihengjia@zjnu.edu.cn}

\cortext[cor1]{Corresponding author}

\affiliation[inst1]{
    organization={School of Big Data and Artificial Intelligence, Fuyang University of Technology},
    city={Fuyang},
    postcode={236000},
    state={Anhui},
    country={China}
}

\affiliation[inst2]{
    organization={School of Computer Science and Technology, Zhejiang Normal University},
    city={Jinhua},
    postcode={321004},
    state={Zhejiang},
    country={China}
}

\begin{abstract}
The small blocking set detection and construction problem asks whether a given connected simple graph contains a nonempty independent set whose cardinality equals the size of its neighborhood and is strictly less than half the number of vertices, and, if so, how to construct such a set. Previous work characterized restless explorability in vertex-permuted temporal graphs using small blocking sets, but left the computational complexity of detecting their existence as an open problem. We give a polynomial-time algorithm for detection and construction. Specifically, we first compute a maximum-cardinality critical independent set and use its difference and cardinality to handle the cases that admit an immediate solution. In the remaining boundary case, where the critical difference is zero and the selected independent set contains exactly half the vertices, we establish a projection theorem using a perfect matching and neighborhood counting: the graph contains a small blocking set if and only if the selected independent set contains a nonempty proper tight subset. We then use matching contraction to translate the tight-subset condition into the absence of arcs leaving a vertex subset in an auxiliary digraph. This yields a strong-connectivity test for the existence of a small blocking set and, when one exists, a construction based on a sink strongly connected component. The result resolves the open complexity question for small blocking set detection and provides an explicit construction algorithm.
\end{abstract}

\begin{keyword}
Small blocking set \sep Critical independent set \sep Projection theorem \sep Perfect matching \sep Strongly connected component
\end{keyword}

\end{frontmatter}

\section{Introduction}
\label{sec:introduction}

Connections in communication, transportation, and social contact networks are inherently temporal, and the order of interactions affects reachability and propagation. Dynamic and temporal graphs are therefore widely used to model and analyze such systems~\cite{holme2012,casteigts2012}. For example, Pan and Saram\"aki used temporal networks to study paths and centrality in human communication and air transportation, revealing differences between static distances and temporal reachability~\cite{pan2011}. Salath\'e et al. recorded interpersonal contacts using wireless sensors to construct a high-resolution contact network for studying infectious disease transmission~\cite{salathe2010}.

Research on computation and communication under dynamic topology has addressed fundamental problems such as information dissemination, distributed computation, and network exploration. Kuhn, Lynch, and Oshman studied distributed computation in continuously changing networks and analyzed how connectivity and stability affect computational efficiency~\cite{kuhn2010}. In temporal graph exploration, a mobile agent must visit all vertices using edges available at the corresponding time steps. Michail and Spirakis investigated algorithms and complexity for this problem~\cite{michail2016}. Erlebach, Hoffmann, and Kammer further studied exploration time bounds, hardness of approximation, and exploration algorithms for special classes of underlying graphs~\cite{erlebach2021}, while Erlebach and Spooner obtained improved exploration time bounds for degree-bounded temporal graphs~\cite{erlebach2018}.

Restless exploration additionally requires the agent to move at every time step, rather than wait for an edge to become available. For infinite periodic temporal graphs, Bellitto, Conchon-Kerjan, and Escoffier showed that the problem is solvable in polynomial time for period two and is NP-hard for every fixed period of at least three~\cite{bellitto2023}. These results demonstrate that the combination of temporal variation and restrictions on waiting gives exploration a structure and complexity distinct from static graph traversal.

Ayoubi and Narayanan studied restless exploration and token dissemination in vertex-permuted temporal graphs~\cite{ayoubi2025}. Every snapshot in such a graph is isomorphic to the same base graph, but the correspondence between vertices may change over time. They introduced blocking sets and proved that, for sufficiently long lifetimes, every vertex-permuted temporal graph based on a connected graph admits restless exploration if and only if the base graph contains no blocking set of cardinality strictly less than half its number of vertices. They left the complexity of detecting such a set as an open problem. Direct enumeration of vertex subsets can test this condition, but the number of candidates grows exponentially in the worst case.

We give a polynomial-time algorithm for detecting and constructing small blocking sets. We first compute a maximum-cardinality critical independent set using an existing algorithm~\cite{larson2007} and perform an initial classification according to its difference and cardinality. For the remaining boundary case, where the critical difference is zero and the selected set contains exactly half the vertices, we establish an equivalence between the existence of a small blocking set in the graph and the existence of a nonempty proper tight subset within the selected independent set. This reduces the search over the entire graph to a structural test within a fixed independent set. We then use a perfect matching to construct a digraph, translate tightness into the absence of arcs leaving a vertex subset, and compute the condensation digraph using a strongly connected component algorithm~\cite{tarjan1972}. A vertex of outdegree zero in the condensation digraph corresponds to a sink strongly connected component (sink SCC), whose corresponding vertices on the independent-set side form a tight subset. If this set contains strictly fewer than half the vertices of the original graph, it is the required small blocking set. If the auxiliary digraph is strongly connected, its only component corresponds to the entire independent set and does not satisfy the strict cardinality condition. Together, these steps give a polynomial-time detection and construction algorithm.

\section{Problem formulation}
\label{sec:formulation}

Let $G=(V,E)$ be a finite, connected, undirected simple graph, with $n=|V|$ and $m=|E|$, and assume $n\geq 2$. Ayoubi and Narayanan studied restless exploration in vertex-permuted temporal graphs (VPGs). In this model, the topology of the base graph $G$ remains unchanged, but vertex identities may be permuted at every time step; thus, every snapshot is isomorphic to $G$. Given a sequence of permutations $\Pi=(\pi_1,\ldots,\pi_L)$ of $V$, the corresponding temporal graph is
\[
\mathcal{G}^{\Pi}=(G_1,\ldots,G_L),
\]
where
\[
G_t=(V,E_t),\qquad
E_t=\bigl\{\{\pi_t(u),\pi_t(v)\}:\{u,v\}\in E\bigr\}.
\]
An agent starts at a specified vertex and must traverse an available edge at each time step, without waiting at a vertex. The objective of restless exploration is to find a restless temporal walk that visits every vertex. Combining Theorems 10 and 14 of Ayoubi and Narayanan, for lifetime $L\geq n(n-1)$, every VPG based on $G$ admits restless exploration from every starting vertex if and only if $G$ contains no small blocking set~\cite{ayoubi2025}. Equivalently, if $G$ contains a small blocking set, one can construct a permutation sequence and a starting vertex for which the corresponding VPG admits no restless exploration.

\begin{definition}[Small blocking set]
\label{def:small}
For an independent set $S\subseteq V$, let $N(S)$ denote the set of vertices adjacent to at least one vertex of $S$. A nonempty independent set $S\subseteq V$ is a \emph{small blocking set} of $G$ if
\[
|S|=|N(S)|<\frac{n}{2}.
\]
\end{definition}

We study whether a small blocking set exists and how to construct one when it does. To describe the cardinality relationship between an independent set and its neighborhood, define the \emph{difference} of an independent set $X\subseteq V$ by
\[
d(X)=|X|-|N(X)|.
\]
The small blocking set detection and construction problem can therefore be stated equivalently as finding an independent set $S\subseteq V$ such that
\[
S\neq\varnothing,\qquad d(S)=0,\qquad |S|<\frac{n}{2},
\]
or determining that no such set exists.

To solve this problem, we use critical independent sets.

\begin{definition}[Critical independent set~\cite{larson2007}]
\label{def:critical}
An independent set $I\subseteq V$ is a \emph{critical independent set} (CIS) if
\[
d(I)\geq d(X)
\]
for every independent set $X\subseteq V$. A critical independent set of largest cardinality among all critical independent sets is called a \emph{maximum-cardinality critical independent set}.
\end{definition}

Accordingly, define the critical independence difference of $G$ as
\[
\idc(G)=\max\{d(X):X\subseteq V,\ X\text{ is independent}\}.
\]
Every critical independent set $I$ satisfies $d(I)=\idc(G)$. The maximization allows the empty set, so $d(\varnothing)=0$ implies $\idc(G)\geq 0$.

Larson gave a polynomial-time algorithm for computing a maximum-cardinality critical independent set~\cite{larson2007}. We use the independent set $I$ returned by this algorithm as the starting point for a case analysis based on its difference and cardinality, leading to a polynomial-time algorithm for small blocking set detection and construction.

\section{Classification using critical independent sets}
\label{sec:classification}

Compute a maximum-cardinality critical independent set $I$ and let
\[
\delta=d(I)=\idc(G).
\]
Since $\delta\geq 0$, we consider the cases $\delta>0$ and $\delta=0$ separately.

\subsection{Positive critical difference}
\label{subsec:positive}

We use the following result of Ayoubi and Narayanan. Here, a \emph{blocking set} is a nonempty independent set $S$ satisfying $|S|=|N(S)|$, without the strict cardinality requirement in Definition~\ref{def:small}.

\begin{lemma}[Ayoubi and Narayanan~{\cite[Lemma 5]{ayoubi2025}}]
\label{lem:deletion}
If a nonempty independent set $I$ in a connected graph satisfies
\[
|I|\geq |N(I)|,
\]
then deleting vertices one at a time constructs a blocking set $S\subseteq I$ satisfying
\[
|S|\leq |N(I)|.
\]
\end{lemma}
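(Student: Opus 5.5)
The plan is to shrink $I$ one vertex at a time, keeping the invariant $|T|\geq |N(T)|$, until the inequality becomes an equality. Let $T_0=I$. Since $T_0$ is nonempty and independent with $|T_0|\geq|N(T_0)|$, we know $d(T_0)\geq 0$. We build a chain $I=T_0\supsetneq T_1\supsetneq\cdots$ of nonempty subsets of $I$ with $d(T_j)\geq 0$. At step $j$, if $d(T_j)=0$ we stop and put $S=T_j$. Otherwise $d(T_j)\geq 1$, and we delete one vertex $v\in T_j$ to get $T_{j+1}=T_j\setminus\{v\}$. Every subset of an independent set is independent, so each $T_j$ is independent. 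Every neighbor of $T_{j+1}$ is a neighbor of $T_j$, so $N(T_{j+1})\subseteq N(T_j)\subseteq N(I)$.

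\textbf{Controlling the difference.} Deleting $v$ lowers $|T|$ by exactly one and cannot enlarge $N(T)$. Hence
\[
d(T_{j+1})\geq d(T_j)-1\geq 0.
\]
The difference therefore never becomes negative, and it falls by at most one per step.

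\textbf{Nonemptiness and termination.} This is where connectivity is needed. A nonempty independent set in a connected graph with $n\geq 2$ has a nonempty neighborhood: any vertex $u\in T$ has some neighbor, and that neighbor is not in $T$ because $T$ is independent. So $|N(T)|\geq 1$. In particular, a single vertex $T=\{u\}$ has $d(T)=1-\deg(u)\leq 0$. Since the chain is strictly decreasing in size, and $d(T)\geq 0$ fails to be positive once $|T|=1$, the process must reach $d(T_j)=0$ while $T_j$ is still nonempty. That set $S=T_j$ is independent, nonempty, and satisfies $|S|=|N(S)|$, so it is a blocking set.

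\textbf{The size bound.} Because $N(S)\subseteq N(I)$, we get
\[
|S|=|N(S)|\leq|N(I)|.
\]

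I expect the main subtlety to be making sure the process stops exactly at zero and never skips past it to a negative value, rather than just staying bounded. The one-step inequality $d(T_{j+1})\geq d(T_j)-1$, combined with $d\leq 0$ at singletons, handles this. If more structure is wanted, the deleted vertex can be any vertex of $T_j$; the argument does not depend on that choice.
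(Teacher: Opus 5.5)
Your proof is correct and is the same vertex-deletion argument the paper relies on. The paper gives no proof of its own: it cites Ayoubi and Narayanan's Lemma 5 and describes the procedure only in its running-time analysis. Your invariant $d(T_{j+1})\geq d(T_j)-1\geq 0$, which follows from $N(T_{j+1})\subseteq N(T_j)$, together with $d(\{u\})=1-\deg(u)\leq 0$ on singletons, is exactly what forces the process to stop at a nonempty $S$ with $|S|=|N(S)|\leq |N(I)|$. You were also right to use $n\geq 2$ explicitly, since the lemma fails for the one-vertex graph; the paper covers this case only through its standing assumption $n\geq 2$.
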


This immediately gives the required positive-difference branch.

\begin{corollary}
\label{cor:positive}
If $\delta>0$, a small blocking set can be constructed in polynomial time.
\end{corollary}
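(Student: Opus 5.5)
We apply Lemma~\ref{lem:deletion} to the maximum-cardinality critical independent set $I$ computed by Larson's algorithm, and then check that the resulting blocking set meets the strict cardinality bound of Definition~\ref{def:small}.

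First, $I$ is nonempty and satisfies the hypothesis of Lemma~\ref{lem:deletion}. Since $\delta=d(I)>0$ and $d(\varnothing)=0$, we have $I\neq\varnothing$. Moreover $|I|=|N(I)|+\delta>|N(I)|$, so $|I|\geq|N(I)|$ holds, and $G$ is connected by assumption. Lemma~\ref{lem:deletion} therefore gives a blocking set $S\subseteq I$ with $|S|=|N(S)|$, $S\neq\varnothing$, and $|S|\leq|N(I)|$.

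Second, we show $|S|<n/2$. Because $I$ is independent, $I\cap N(I)=\varnothing$, so
\[
n\geq |I|+|N(I)| = 2|N(I)|+\delta > 2|N(I)|.
\]
Hence $|N(I)|<n/2$, and so $|S|\leq|N(I)|<n/2$. Thus $S$ is a small blocking set.

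Third, the construction runs in polynomial time. Larson's algorithm computes $I$ in polynomial time~\cite{larson2007}. The deletion procedure of Lemma~\ref{lem:deletion} removes at most $|I|\leq n$ vertices one at a time. Each step only needs neighborhood sizes and the test $|X|=|N(X)|$, which costs $O(n+m)$ per candidate, so the total cost is polynomial.

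The only point needing care is this last one: Lemma~\ref{lem:deletion} is stated as an existence-by-deletion result, not with an explicit running time. If the deletion order in~\cite{ayoubi2025} is not made explicit, I would implement it directly. Maintain $X\subseteq I$ with $d(X)\geq 0$. While $d(X)>0$, delete any single vertex; this lowers $|X|$ by one and lowers $|N(X)|$ by at most its degree, so we must choose a vertex whose removal keeps $d(X)\geq 0$. Stop when $d(X)=0$. Alternatively, scan all $x\in X$ for such a vertex, which costs $O(n(n+m))$ per step and at most $n$ steps. In either case the cost remains polynomial, so the overall algorithm is polynomial.
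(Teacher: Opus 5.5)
Your proof is correct and follows the paper's argument: you apply Lemma~\ref{lem:deletion} to $I$, then combine $|I|+|N(I)|\le n$ with $|I|>|N(I)|$ to get $|S|\le|N(I)|<n/2$. Your explicit check that $I\neq\varnothing$ covers a hypothesis of the lemma that the paper leaves implicit.

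Your fallback implementation is more cautious than it needs to be. Deleting one vertex lowers $d(X)$ by at most one, because shrinking $N(X)$ only raises $d$, so any vertex may be removed while $d(X)>0$. Connectivity with $n\ge 2$ forces $d\le 0$ on singletons, so the loop halts at a nonempty $X$ with $d(X)=0$. The paper handles the running time separately in its complexity section, where it bounds this procedure by $O(n+m)$.
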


\begin{proof}
Since $\delta>0$, we have $|I|>|N(I)|$. By Lemma~\ref{lem:deletion}, we can construct a blocking set $S\subseteq I$ with $|S|\leq |N(I)|$. Since $I$ is independent,
\[
|I|+|N(I)|\leq n.
\]
Together with $|I|>|N(I)|$, this implies $2|N(I)|<n$. Hence
\[
0<|S|=|N(S)|\leq |N(I)|<\frac{n}{2}.
\]
Thus, $S$ is a small blocking set.
\end{proof}

\subsection{Zero critical difference}
\label{subsec:zero}

\begin{theorem}
\label{thm:zero}
Suppose that $\delta=0$ and $I$ is a maximum-cardinality CIS. Then:
\begin{enumerate}
\item If $I=\varnothing$, the graph contains no nonempty blocking set.
\item If $0<|I|<n/2$, then $I$ itself is a small blocking set.
\item The only remaining case is $|I|=n/2$.
\end{enumerate}
\end{theorem}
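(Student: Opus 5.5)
The three parts are handled separately. Each uses only the definitions of $d$ and $\idc$, the maximality of $|I|$ among critical independent sets, and the elementary inequality $|I|+|N(I)|\le n$.

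For part 1, suppose $I=\varnothing$. We argue by contradiction: assume $S$ is a nonempty blocking set, that is, a nonempty independent set with $|S|=|N(S)|$. Then $d(S)=0=\delta=\idc(G)$, so $S$ attains the maximum difference and is itself a critical independent set. Since $|S|>0=|I|$, this contradicts the choice of $I$ as a maximum-cardinality CIS. Hence no nonempty blocking set exists, and in particular no small blocking set exists.

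For part 2, $I$ is independent and nonempty by assumption. Because $d(I)=\delta=0$, we have $|I|=|N(I)|$. Together with $|I|<n/2$, this matches Definition~\ref{def:small} exactly, so $I$ is a small blocking set.

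For part 3, we must show $|I|\le n/2$, so that the only case left once parts 1 and 2 are excluded is $|I|=n/2$. Since $I$ is independent, $I\cap N(I)=\varnothing$, and therefore $|I|+|N(I)|\le n$. Substituting $|N(I)|=|I|$, which follows from $\delta=0$, gives $2|I|\le n$. The cases $I=\varnothing$, $0<|I|<n/2$ and $|I|=n/2$ therefore exhaust all possibilities.

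None of these steps is a genuine obstacle. The only point needing care is in part 1, where the argument uses that the maximum in $\idc(G)$ ranges over all independent sets, including $\varnothing$ and the blocking set $S$. This is what makes $S$ a critical independent set, and so a competitor to $I$ in the cardinality maximization.
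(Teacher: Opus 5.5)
Your proof is correct and follows the paper's argument essentially step for step. You use the same three ingredients: a nonempty blocking set would be a larger CIS when $I=\varnothing$, $d(I)=0$ gives $|I|=|N(I)|$ in the second case, and $2|I|=|I|+|N(I)|\le n$ excludes $|I|>n/2$.
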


\begin{proof}
Suppose first that $I=\varnothing$ and, for a contradiction, that there is a nonempty blocking set $S$. Then
\[
d(S)=0=\idc(G),
\]
so $S$ is a nonempty CIS, contradicting the choice of the empty set as a maximum-cardinality CIS.

If $0<|I|<n/2$, then $d(I)=0$ gives $|I|=|N(I)|$, so $I$ itself is a small blocking set.

Finally, since $I$ is independent and $d(I)=0$,
\[
2|I|=|I|+|N(I)|\leq n.
\]
Thus, $|I|\leq n/2$. After excluding the empty set and the case $|I|<n/2$, only $|I|=n/2$ remains.
\end{proof}

The maximum-cardinality requirement is used here primarily to justify the empty-set branch. If an arbitrary CIS is computed instead, the subroutine may return the empty set even when a nonempty blocking set exists.

\section{Boundary structure and the projection theorem}
\label{sec:projection}

\subsection{Boundary structure}
\label{subsec:boundary}

By Theorem~\ref{thm:zero}, the only case requiring further analysis is
\[
\delta=0,\qquad |I|=\frac{n}{2}.
\]
Let $J=V\setminus I$.

\begin{theorem}
\label{thm:matching}
If $\delta=0$ and $|I|=n/2$, then
\[
N(I)=J,\qquad |I|=|J|,
\]
and the bipartite graph $H=(I,J;E_H)$ has a perfect matching, where
\[
E_H=\{xy\in E:x\in I,\ y\in J\}.
\]
\end{theorem}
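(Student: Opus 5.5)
The first two claims follow from counting. Since $\delta=d(I)=0$ we have $|N(I)|=|I|=n/2$. Because $I$ is independent, $N(I)\subseteq V\setminus I=J$, and $|J|=n-n/2=n/2$. Hence $N(I)=J$ and $|I|=|J|$.

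For the perfect matching I will use Hall's theorem on the bipartite graph $H$, checking Hall's condition from the $I$ side. Take any $X\subseteq I$. Then $X$ is independent in $G$, and since $N(X)\subseteq N(I)=J$, its neighbourhood in $G$ equals its neighbourhood in $H$; call it $N_H(X)$. By the definition of $\idc(G)$ and $\delta=0$, we get $d(X)\le \idc(G)=0$, that is,
\[
|X|\le |N(X)|=|N_H(X)|.
\]
So Hall's condition holds for every subset of $I$, and $H$ has a matching saturating $I$. Since $|I|=|J|$, this matching saturates $J$ as well and is therefore perfect.

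No step here is a real obstacle. The two points to state explicitly are that every subset of an independent set is independent, so criticality bounds $d(X)$ for all $X\subseteq I$, and that $G$-neighbourhoods of subsets of $I$ lie entirely in $J$, so they coincide with $H$-neighbourhoods. Connectivity of $G$ is not needed. Algorithmically, the perfect matching can then be found in polynomial time, for example with Hopcroft--Karp.
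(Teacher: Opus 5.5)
Your proposal is correct and follows the paper's proof essentially step for step. You get $N(I)=J$ by counting, and you verify Hall's condition for subsets $X\subseteq I$ via $d(X)\le \idc(G)=0$, noting that neighbourhoods of subsets of $I$ lie in $J$. You then use $|I|=|J|$ to conclude the matching is perfect.
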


\begin{proof}
Since $d(I)=0$, we have $|N(I)|=|I|=n/2$. Independence of $I$ implies that $I$ and $N(I)$ are disjoint, and their cardinalities sum to $n$. Hence
\[
N(I)=V\setminus I=J.
\]
For any $X\subseteq I$, independence of $X$ and $\idc(G)=0$ imply $|N(X)|\geq |X|$. Moreover, all neighbors of $X$ lie in $J$ because $I$ is independent. Therefore $N_H(X)=N(X)$. Thus, $H$ satisfies Hall's condition and has a matching saturating $I$. Since $|I|=|J|$, this matching is perfect.
\end{proof}

\subsection{Projection of small blocking sets}
\label{subsec:projection}

\begin{theorem}[Projection theorem]
\label{thm:projection}
Let $I$ be any maximum-cardinality critical independent set of $G$ such that
\[
\idc(G)=0,\qquad |I|=\frac{n}{2}.
\]
Then $G$ contains a small blocking set if and only if $I$ contains a nonempty proper subset $T\subsetneq I$ satisfying
\[
|T|=|N(T)|.
\]
Such a subset is called a \emph{tight subset} of $I$.
\end{theorem}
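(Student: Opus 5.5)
The plan is to prove the two directions separately, using the perfect matching $M$ of $H$ from Theorem~\ref{thm:matching} together with Hall's inequality: every $X\subseteq I$ satisfies $|N(X)|\geq |X|$ because $\idc(G)=0$, and $N(X)\subseteq J$.

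\emph{Sufficiency.} Suppose $T\subsetneq I$ is nonempty and $|T|=|N(T)|$. Then $T$ is independent because $I$ is. Since $T$ is a proper subset, $|T|<|I|=n/2$. Hence $T$ is itself a small blocking set, and this direction is immediate.

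\emph{Necessity.} Let $S$ be a small blocking set. Split it as $A=S\cap I$ and $B=S\cap J$. We bound $|N(S)|$ from below in two parts.
\begin{itemize}
\item Since $I$ is independent, $N(A)\subseteq J$, so $N(A)\subseteq N(S)\cap J$.
\item For each $y\in B$, its partner $M(y)\in I$ is a neighbour of $y$, so $M(B)\subseteq N(S)\cap I$. Hence $|N(S)\cap I|\geq |B|$.
\end{itemize}
Combining these with Hall's inequality gives
\[
|A|+|B|=|S|=|N(S)|\geq |N(A)|+|B|\geq |A|+|B|.
\]
Equality must hold throughout. Therefore
\[
|N(A)|=|A|,\qquad N(S)\cap J=N(A),\qquad N(S)\cap I=M(B).
\]
We then finish with two cases.
\begin{itemize}
\item If $A\neq\varnothing$, then $A$ is the desired tight subset. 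It is proper because $|A|\leq|S|<n/2=|I|$.
\item If $A=\varnothing$, then $S=B\subseteq J$ and $N(B)=M(B)$, so no vertex of $T:=I\setminus M(B)$ is adjacent to $B$.
\end{itemize}

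The main obstacle is the degenerate case $A=\varnothing$, where the projection $S\cap I$ is empty. There the naive projection fails, and we pass instead to the complementary set $T=I\setminus M(B)$. We have $|T|=n/2-|B|$. This is positive because $|B|=|S|<n/2$, and $T$ is proper because $B\neq\varnothing$. Since $T$ has no neighbour in $B$, we get $N(T)\subseteq J\setminus B$, so $|N(T)|\leq n/2-|B|=|T|$. Hall's inequality gives the reverse bound, hence $|N(T)|=|T|$ and $T$ is a nonempty proper tight subset of $I$. This completes the plan for both directions.
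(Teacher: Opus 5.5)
Your proof is correct and follows the paper's argument: the same split $S=(S\cap I)\cup(S\cap J)$, the same disjoint lower bound on $|N(S)|$ that forces equality, and the same complementary set $T=I\setminus N(S)$ when $S\cap I=\varnothing$. The only minor difference is that you get $|N(T)|\ge|T|$ from $\idc(G)=0$, whereas the paper shows $J\setminus S\subseteq N(T)$ using the leftover matching edges; both are valid.
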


\begin{proof}
Let $J=V\setminus I$. By Theorem~\ref{thm:matching}, the bipartite graph consisting of the edges between $I$ and $J$ has a perfect matching. Fix any such matching $M$.

We first prove the forward implication. Let $S$ be any small blocking set and define
\[
A=S\cap I,\qquad B=S\cap J.
\]
Then $S=A\cup B$ and $A\cap B=\varnothing$. Since $S$ is independent, there are no edges between $A$ and $B$. Write
\[
N_I(B)=N(B)\cap I.
\]
The perfect matching assigns a distinct matched vertex in $J$ to every vertex of $A$, so $|N(A)|\geq |A|$. None of these matched vertices belongs to $B$, since otherwise $S$ would contain an edge. Similarly, each vertex of $B$ has a distinct matched vertex in $I$, giving $|N_I(B)|\geq |B|$.

Because $A\subseteq I$ and $I$ is independent, $N(A)\subseteq J$. On the other hand, $N_I(B)\subseteq I$. These two neighborhood sets are therefore disjoint, and both are subsets of $N(S)$. Consequently,
\[
|N(S)|\geq |N(A)|+|N_I(B)|\geq |A|+|B|=|S|.
\]
Since $S$ is a blocking set, $|N(S)|=|S|$, so equality holds throughout this chain. In particular,
\[
|N(A)|=|A|,\qquad |N_I(B)|=|B|.
\]
We now distinguish whether $S\cap I$ is empty.

\medskip
\noindent\textbf{Case 1: $S\cap I\neq\varnothing$.}
Set $T=A=S\cap I$. Since $T\subseteq I$, the set $T$ is independent. The equality above gives $|N(T)|=|T|$. Furthermore,
\[
0<|T|\leq |S|<\frac{n}{2}=|I|,
\]
so $\varnothing\neq T\subsetneq I$. Thus, $T$ is a nonempty proper tight subset of $I$.

\medskip
\noindent\textbf{Case 2: $S\cap I=\varnothing$.}
Here $S=B\subseteq J$, and the equality above gives $|N_I(S)|=|S|$. Since
\[
N_I(S)\subseteq N(S),\qquad |N(S)|=|S|,
\]
we obtain $N_I(S)=N(S)$ and hence $N(S)\subseteq I$. Let
\[
C=N(S),\qquad T=I\setminus C.
\]
Since $0<|C|=|S|<|I|$, we have
\[
0<|T|=|I|-|C|<|I|.
\]
We claim that $N(T)=J\setminus S$.

First, $T\subseteq I$ and independence of $I$ imply $N(T)\subseteq J$. If some $t\in T$ were adjacent to a vertex $s\in S$, then $t\in N(S)=C$, contradicting $T=I\setminus C$. Therefore,
\[
N(T)\subseteq J\setminus S.
\]
Conversely, the perfect matching matches each vertex of $S$ to a distinct vertex of $C$. Since $|C|=|S|$, these matched vertices exhaust $C$. The remaining matching edges therefore match $J\setminus S$ bijectively to $I\setminus C=T$. Hence
\[
J\setminus S\subseteq N(T).
\]
Combining the two inclusions yields $N(T)=J\setminus S$. It follows that
\[
|N(T)|=|J|-|S|,\qquad
|T|=|I|-|C|=|I|-|S|.
\]
Since $|I|=|J|$, we have $|N(T)|=|T|$. Thus, $T$ is again a nonempty proper tight subset of $I$, proving the forward implication.

For the reverse implication, suppose that $\varnothing\neq T\subsetneq I$ and $|T|=|N(T)|$. Since $T\subseteq I$, it is independent, and
\[
0<|T|<|I|=\frac{n}{2}.
\]
Thus, $T$ itself is a small blocking set.
\end{proof}

Theorem~\ref{thm:projection} shows that, in the case $\delta=0$ and $|I|=n/2$, any graph containing a small blocking set also contains one within $I$. It therefore suffices to detect and construct such a set within $I$.

\section{Algorithm design and analysis}
\label{sec:algorithm}

\subsection{Constructing a small blocking set within a maximum-cardinality critical independent set}
\label{subsec:digraph}

The projection theorem reduces the problem on the entire graph to deciding whether $I$ contains a nonempty proper tight subset. We perform this test using the standard matching-contraction construction.

\begin{theorem}
\label{thm:scc}
Let $I$ be any maximum-cardinality critical independent set of $G$ such that
\[
\idc(G)=0,\qquad |I|=\frac{n}{2}.
\]
Choose any perfect matching in $H=(I,J;E_H)$,
\[
M=\{x_i y_i:1\leq i\leq k\},
\]
where
\[
I=\{x_1,\ldots,x_k\},\quad
J=\{y_1,\ldots,y_k\},\quad k=\frac{n}{2}.
\]
Construct a digraph $D_M$ with vertex set $[k]=\{1,\ldots,k\}$. For distinct $i,j\in[k]$, add an arc $i\to j$ if $x_i y_j\in E$. Then
\[
G\text{ contains a small blocking set}
\quad\Longleftrightarrow\quad
D_M\text{ is not strongly connected}.
\]
When $D_M$ is not strongly connected, every sink strongly connected component $C$ gives a small blocking set of $G$,
\[
T_C=\{x_i:i\in C\}.
\]
\end{theorem}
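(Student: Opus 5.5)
The approach is to combine Theorem~\ref{thm:projection} with a characterization of tight subsets of $I$ through closed sets of $D_M$. For an index set $K\subseteq[k]$ write $T_K=\{x_i:i\in K\}$. Call $K$ \emph{closed} if no arc of $D_M$ leaves $K$. The central claim is
\[
|N(T_K)|=|T_K|\iff K\text{ is closed in }D_M,
\]
valid for every nonempty $K$.

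To prove the claim, note that $N(T_K)\subseteq J$ because $I$ is independent. Also $N(T_K)\supseteq\{y_i:i\in K\}$, since each $x_i$ is matched to $y_i$. Hence $|N(T_K)|\geq |K|$, with equality if and only if $N(T_K)=\{y_i:i\in K\}$. By the definition of $D_M$, this says exactly that for every $i\in K$ and every edge $x_iy_j\in E$ we have $j\in K$. The case $j=i$ is the matching edge and is automatic. For $j\neq i$ the edge is precisely the arc $i\to j$, so the condition is that no arc leaves $K$, i.e.\ $K$ is closed.

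With the claim established, the equivalence in the statement follows. By Theorem~\ref{thm:projection}, $G$ has a small blocking set if and only if $I$ has a nonempty proper tight subset. By the claim and the bijection $K\mapsto T_K$, this holds if and only if $D_M$ has a nonempty proper closed vertex set.

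It remains to identify this with $D_M$ failing to be strongly connected. If $D_M$ is strongly connected, any nonempty proper $K$ contains some $i$ and misses some $j$. A directed path from $i$ to $j$ must then use an arc leaving $K$, so $K$ is not closed. If $D_M$ is not strongly connected, its condensation is a DAG with at least two vertices and therefore has a sink. The corresponding sink SCC $C$ is closed, nonempty, and proper.

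For the construction statement, every sink SCC $C$ is closed, and it is proper because $D_M$ has at least two SCCs. By the claim, $T_C$ satisfies $|N(T_C)|=|T_C|$. Moreover $T_C$ is independent, being a subset of $I$, and satisfies $0<|T_C|<k=n/2$. So $T_C$ is a small blocking set directly, by the reverse direction of Theorem~\ref{thm:projection}.

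I expect no serious obstacle. The step needing care is the claim: one has to use that the matched partners $y_i$, $i\in K$, already lie in $N(T_K)$, so that tightness forces the neighborhood to equal exactly the matched partners. Diagonal matching edges must be correctly excluded from the arcs. The case $k=1$, where $D_M$ is trivially strongly connected and no proper nonempty subset exists, is consistent with the statement.
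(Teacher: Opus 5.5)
Your proof is correct, and it uses the same two computations as the paper, organized differently. The paper obtains the central equivalence by citing Dulmage--Mendelsohn theory via B\'erczi et al. That equivalence says a nonempty proper tight subset of $I$ exists if and only if $D_M$ is not strongly connected. The paper then adds two separate arguments. One shows $M(S)=N(S)$ for every tight $S$, to argue that the test does not depend on the choice of $M$. The other computes $N(T_C)=\{y_i:i\in C\}$ directly for a sink SCC $C$.

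Your lemma, $|N(T_K)|=|T_K|$ if and only if $K$ is closed, merges these two computations into one statement valid for every $K$. You combine it with the elementary fact, which you prove, that a finite digraph is strongly connected if and only if it has no nonempty proper closed vertex set. This removes the need for the citation entirely.

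Your route buys three things:
\begin{itemize}
\item It is self-contained.
\item Independence from the choice of $M$ becomes automatic, since your lemma holds for every perfect matching at once.
\item You never have to check that the orientation and contraction conventions of the cited auxiliary digraph match $D_M$.
\end{itemize}
The paper's route places the tight subsets of $I$ within the known Dulmage--Mendelsohn framework. Logically, though, it supplies nothing your argument lacks.
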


\begin{proof}
By Theorem~\ref{thm:projection}, in the case $\idc(G)=0$ and $|I|=n/2$, the graph $G$ contains a small blocking set if and only if $I$ contains a nonempty proper tight subset. If $k=1$, then $I$ has no nonempty proper subset and $D_M$ is a single-vertex strongly connected digraph, so the assertion holds directly. We henceforth consider $k\geq 2$.

For a balanced bipartite graph with a perfect matching, classical Dulmage--Mendelsohn decomposition theory states that a nonempty proper subset $S\subsetneq I$ satisfying $|N(S)|=|S|$ exists if and only if the auxiliary digraph induced by the perfect matching is not strongly connected~\cite{berczi2018}. For any chosen perfect matching $M$, our digraph $D_M$ is precisely the corresponding directed representation obtained by orienting edges from $I$ to $J$, contracting each matched pair $x_i y_i$, and omitting self-loops. Therefore,
\[
I\text{ contains a nonempty proper tight subset}
\quad\Longleftrightarrow\quad
D_M\text{ is not strongly connected}.
\]
Here, all neighborhoods of subsets of $I$ are the same in $H$ and $G$.

This test does not depend on the particular choice of $M$. Indeed, for any nonempty proper tight subset $S\subsetneq I$ and any perfect matching $M$, the set $M(S)$ of vertices matched to $S$ satisfies
\[
M(S)\subseteq N(S),\qquad |M(S)|=|S|=|N(S)|,
\]
and hence $M(S)=N(S)$. Thus, under any perfect matching, $S$ corresponds to a nonempty proper vertex subset of $D_M$ with no arcs leaving it. The choice of perfect matching cannot change the outcome of the strong-connectivity test. Combining this equivalence with the projection theorem gives
\[
G\text{ contains a small blocking set}
\quad\Longleftrightarrow\quad
D_M\text{ is not strongly connected}.
\]

When $D_M$ is not strongly connected, its condensation digraph is a directed acyclic graph (DAG) with at least two vertices, and hence has a sink SCC $C$. As $C$ has no arcs to vertices outside it, consider $T_C=\{x_i:i\in C\}$. Since $T_C\subseteq I$ and $I$ is independent, every neighbor of $T_C$ lies in $J$. The matching edges guarantee
\[
\{y_i:i\in C\}\subseteq N(T_C),
\]
while the absence of arcs leaving $C$ guarantees that $T_C$ is not adjacent to any $y_j$ with $j\notin C$. Therefore,
\[
N(T_C)=\{y_i:i\in C\},
\]
so $|T_C|=|N(T_C)|$. Moreover,
\[
0<|T_C|<|I|=\frac{n}{2}.
\]
Since $T_C\subseteq I$ is independent, it is a small blocking set.

If $D_M$ is strongly connected, its only SCC corresponds to the entire set $I$, whose cardinality is $n/2$, so it cannot be output as a small blocking set. The equivalence above requires the algorithm to return \texttt{NO} in this case.
\end{proof}

\subsection{Algorithmic framework}
\label{subsec:framework}

\noindent\textbf{Algorithm 1: Small blocking set detection and construction.}

\smallskip
\noindent\textbf{Input:} A connected simple graph $G=(V,E)$.

\noindent\textbf{Output:} A small blocking set, or \texttt{NO}.

\begin{enumerate}
\item If $n\leq 1$, return \texttt{NO}.
\item Compute a maximum-cardinality CIS $I$ using Larson's algorithm, and set $\delta=|I|-|N(I)|$.
\item If $\delta>0$, apply the vertex-deletion construction of Lemma~\ref{lem:deletion} and return the resulting small blocking set $S$.
\item If $\delta=0$ and $I=\varnothing$, return \texttt{NO}.
\item If $\delta=0$ and $0<|I|<n/2$, return $I$.
\item In the remaining case, $\delta=0$ and $|I|=n/2$. Let $J=V\setminus I$, construct the bipartite graph $H$, and compute any perfect matching $M$.
\item Construct $D_M$ and compute its strongly connected components.
\item If $D_M$ is strongly connected, return \texttt{NO}. Otherwise, choose any SCC $C$ whose vertex in the condensation digraph has outdegree zero, and return $T_C=\{x_i:i\in C\}$.
\end{enumerate}

\subsection{Correctness}
\label{subsec:correctness}

\begin{theorem}
\label{thm:correctness}
Algorithm 1 correctly decides whether a connected simple graph contains a small blocking set and constructs one whenever it exists.
\end{theorem}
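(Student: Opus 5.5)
The plan is to walk through Algorithm 1 step by step and match each branch with one of the results already proved. Every branch will be checked for two things. If it outputs a set, that set is a small blocking set. If it outputs \texttt{NO}, then $G$ really has no small blocking set.

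The early exits come first.

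\textbf{Step 1.} If $n\leq 1$, every nonempty independent set $S$ has $|S|\geq 1>n/2$. So no small blocking set exists and \texttt{NO} is correct.

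\textbf{Step 2.} Larson's algorithm~\cite{larson2007} returns a maximum-cardinality CIS $I$. Hence $\delta=d(I)=\idc(G)\geq 0$, so Steps 3--6 together cover every possible case.

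\textbf{Step 3.} If $\delta>0$, Corollary~\ref{cor:positive} shows that the set $S$ built by the deletion procedure of Lemma~\ref{lem:deletion} is a small blocking set.

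\textbf{Step 4.} If $\delta=0$ and $I=\varnothing$, Theorem~\ref{thm:zero}(1) says there is no nonempty blocking set at all. In particular there is no small one, so \texttt{NO} is correct.

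\textbf{Step 5.} If $\delta=0$ and $0<|I|<n/2$, Theorem~\ref{thm:zero}(2) says $I$ itself is a small blocking set.

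By Theorem~\ref{thm:zero}(3), the only case left is $\delta=0$ with $|I|=n/2$, which is handled in Steps 6--8.

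\textbf{Step 6.} Theorem~\ref{thm:matching} guarantees that $H$ has a perfect matching $M$, so this step is well defined. A standard bipartite matching algorithm finds $M$.

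\textbf{Steps 7--8.} Theorem~\ref{thm:scc} shows that $G$ has a small blocking set if and only if $D_M$ is not strongly connected. It also shows that, in that case, every sink SCC $C$ gives a small blocking set $T_C$. The outputs follow directly:
\begin{itemize}
\item if $D_M$ is strongly connected, no small blocking set exists, so \texttt{NO} is correct;
\item otherwise, the condensation of $D_M$ is a DAG with at least two vertices, so it has a vertex of outdegree zero. The returned set $T_C$ is then a small blocking set.
\end{itemize}

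To complete the argument I would note that every step runs in polynomial time. Larson's CIS computation and bipartite matching are both polynomial, Tarjan's SCC algorithm~\cite{tarjan1972} is linear, and the deletion procedure of Lemma~\ref{lem:deletion} takes at most $n$ deletions, each checked in polynomial time. So the algorithm both decides the question and constructs a witness.

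The step I expect to need the most care is the exhaustiveness of the case analysis. The branches are only complete because $\delta\geq 0$, and because Theorem~\ref{thm:zero} needs $I$ to be a \emph{maximum-cardinality} CIS for the empty-set branch. The algorithm must therefore use that specific output of Larson's algorithm rather than an arbitrary CIS. Beyond this point, the rest is a direct citation of the earlier results.
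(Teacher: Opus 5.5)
Your proposal is correct and follows the paper's proof essentially step for step: you check each branch of Algorithm 1 against Corollary~\ref{cor:positive}, Theorem~\ref{thm:zero}, Theorem~\ref{thm:matching}, and Theorems~\ref{thm:projection} and~\ref{thm:scc}. The differences are only presentational: the paper states explicitly that $n\geq 2$ after Step 1, which the $\delta>0$ branch needs because a single vertex has $\delta=1$ but no blocking set, and it puts the running-time argument in Section~\ref{subsec:complexity} rather than in the correctness proof.
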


\begin{proof}
When $n\leq 1$, no nonempty set can satisfy the required cardinality bound, so step 1 is correct. Suppose henceforth that $n\geq 2$.

If $\delta>0$, Corollary~\ref{cor:positive} guarantees that step 3 outputs a small blocking set.

If $\delta=0$ and $I=\varnothing$, Theorem~\ref{thm:zero} guarantees that no nonempty blocking set exists, so step 4 is correct.

If $\delta=0$ and $0<|I|<n/2$, Theorem~\ref{thm:zero} guarantees that $I$ itself is a small blocking set, so step 5 is correct.

The only remaining case is $\delta=0$ and $|I|=n/2$. By Theorem~\ref{thm:matching}, the perfect matching required in step 6 exists. By Theorems~\ref{thm:projection} and~\ref{thm:scc}, the graph contains a small blocking set if and only if $D_M$ is not strongly connected; in that case, the set corresponding to the chosen sink SCC is a valid small blocking set. Thus, every branch is correct.
\end{proof}

\subsection{Running time}
\label{subsec:complexity}

We compute a maximum-cardinality critical independent set using Larson's MCIS algorithm~\cite{larson2007}, with the matching-reuse implementation described by Strash~\cite{strash2016}. For connected graphs, this step takes $O(m^2)$ time.

When $\delta>0$, we use the iterative deletion procedure in Lemma~\ref{lem:deletion}, removing a vertex from the current independent set and updating its neighborhood at each iteration. With adjacency lists and counts of neighbors in the current independent set, each vertex is deleted at most once and each edge is processed only a constant number of times. This procedure therefore takes $O(n+m)$ time.

In the boundary case, the Hopcroft--Karp algorithm~\cite{hopcroft1973} computes a perfect matching of $H$ in $O(m\sqrt{n})$ time. Constructing $D_M$, computing its SCCs with Tarjan's algorithm~\cite{tarjan1972}, and selecting a sink SCC all take linear time, namely $O(n+m)$.

Since $G$ is connected, $m\geq n-1$. Consequently, the overall running time is $O(m^2)$.

\section{Conclusions and future work}
\label{sec:conclusion}

We studied the detection and construction of small blocking sets in connected simple graphs. A maximum-cardinality critical independent set provides a case classification, and, in the balanced boundary case, the projection theorem reduces the search over the entire graph to the detection of a tight subset within the selected independent set. Matching contraction then reduces this problem to a strong-connectivity test in an auxiliary digraph, and a sink SCC yields a small blocking set. The resulting algorithm runs in $O(m^2)$ time. Future work may investigate base-graph repair by adding the minimum number of edges needed to eliminate all small blocking sets, providing low-cost structural modifications that guarantee universal restless explorability for sufficiently long lifetimes. Another direction is to decide the feasibility of restless exploration and construct exploration schedules for a given permutation sequence, allowing base graphs that do not satisfy the universal explorability condition to be explored under suitable temporal arrangements.


\begin{thebibliography}{00}

\bibitem{holme2012}
P. Holme, J. Saram\"aki,
Temporal networks,
Physics Reports 519 (3) (2012) 97--125.

\bibitem{casteigts2012}
A. Casteigts, P. Flocchini, W. Quattrociocchi, N. Santoro,
Time-varying graphs and dynamic networks,
International Journal of Parallel, Emergent and Distributed Systems 27 (5) (2012) 387--408.

\bibitem{pan2011}
R. K. Pan, J. Saram\"aki,
Path lengths, correlations, and centrality in temporal networks,
Physical Review E 84 (1) (2011) 016105.

\bibitem{salathe2010}
M. Salath\'e, M. Kazandjieva, J. W. Lee, P. Levis, M. W. Feldman, J. H. Jones,
A high-resolution human contact network for infectious disease transmission,
Proceedings of the National Academy of Sciences 107 (51) (2010) 22020--22025.

\bibitem{kuhn2010}
F. Kuhn, N. Lynch, R. Oshman,
Distributed computation in dynamic networks,
in: Proceedings of the 42nd ACM Symposium on Theory of Computing (STOC 2010), 2010, pp. 513--522.

\bibitem{michail2016}
O. Michail, P. G. Spirakis,
Traveling salesman problems in temporal graphs,
Theoretical Computer Science 634 (2016) 1--23.

\bibitem{erlebach2021}
T. Erlebach, M. Hoffmann, F. Kammer,
On temporal graph exploration,
Journal of Computer and System Sciences 119 (2021) 1--18.

\bibitem{erlebach2018}
T. Erlebach, J. T. Spooner,
Faster exploration of degree-bounded temporal graphs,
in: 43rd International Symposium on Mathematical Foundations of Computer Science (MFCS 2018),
Leibniz International Proceedings in Informatics, vol. 117, 2018, pp. 36:1--36:13.

\bibitem{bellitto2023}
T. Bellitto, C. Conchon-Kerjan, B. Escoffier,
Restless exploration of periodic temporal graphs,
in: 2nd Symposium on Algorithmic Foundations of Dynamic Networks (SAND 2023),
Leibniz International Proceedings in Informatics, vol. 257, 2023, pp. 13:1--13:15.

\bibitem{ayoubi2025}
K. Ayoubi, L. Narayanan,
Restless exploration and token dissemination in vertex-permuted temporal graphs,
in: 4th Symposium on Algorithmic Foundations of Dynamic Networks (SAND 2025),
Leibniz International Proceedings in Informatics, vol. 330, 2025, pp. 12:1--12:17.

\bibitem{larson2007}
C. E. Larson,
A note on critical independence reductions,
Bulletin of the Institute of Combinatorics and its Applications 51 (2007) 34--46.

\bibitem{tarjan1972}
R. Tarjan,
Depth-first search and linear graph algorithms,
SIAM Journal on Computing 1 (2) (1972) 146--160.

\bibitem{berczi2018}
K. B\'erczi, S. Iwata, J. Kato, Y. Yamaguchi,
Making bipartite graphs DM-irreducible,
SIAM Journal on Discrete Mathematics 32 (1) (2018) 560--590.

\bibitem{strash2016}
D. Strash,
On the power of simple reductions for the maximum independent set problem,
in: Computing and Combinatorics (COCOON 2016),
Lecture Notes in Computer Science, vol. 9797, 2016, pp. 345--356.

\bibitem{hopcroft1973}
J. E. Hopcroft, R. M. Karp,
An $n^{5/2}$ algorithm for maximum matchings in bipartite graphs,
SIAM Journal on Computing 2 (4) (1973) 225--231.

\end{thebibliography}
\end{document}